\begin{document}

\title{A chaos-based approach for information hiding security}
\author{Jacques M. Bahi, Senior Member IEEE\\Christophe Guyeux} 
\institute{Computer Science Laboratory LIFC\\ University of Franche-Comte, France}
\maketitle

\begin{abstract}
This paper introduces a new framework for data hiding security. Contrary to the existing ones, the approach introduced here is not based on probability theory. In this paper, a scheme is considered as secure if its behavior is proven unpredictable. The objective of this study is to enrich the existing notions of data hiding security with a new rigorous and practicable one. This new definition of security is based on the notion of topological chaos. It could be used to reinforce the confidence on a scheme previously proven as secure by other approaches and it could also be used to study some classes of attacks that currently cannot be studied by the existing security approaches. After presenting the theoretical framework of the study, a concrete example is detailed in order to show how our approach can be applied.
\end{abstract}

\emph{Keywords:} Information hiding security, Theory of chaos, Spread-spectrum.

\section{Introduction}

In past decades, the studies in the information hiding domain have almost exclusively been focused on robustness ~\cite{AdelsbachKS06},~\cite{BattiatoCGG99}. Security has emerged in the last years as a new interest in this domain~\cite{Cachin2004},~\cite{Mittelholzer99}, \cite{Perez-Freire2006:Security}. Security and robustness are neighboring concepts without clearly established definitions~\cite{Perez-Freire06}. Robustness is often considered to be mostly concerned by blind elementary attacks, whereas security is not limited to certain specific attacks. Security encompasses robustness and intentional attacks~\cite{ComesanaPP05bis},~\cite{Kalker2001}. The attempts to define the differences between robustness and security, to clarify the classes of attacks and to give some consistence to the notion of security, illustrates the recent important concern to bring a rigorous theoretical framework for security in data hiding.

In the framework of watermarking and steganography, security has seen several important developments since the last decade. Nevertheless, several open questions still remain. For example, the current theoretical approach requires strong hypotheses on the covertext and these hypotheses are quite difficult to evaluate in practice. Moreover, even if several security classes have been identified since the first classification of attacks, only a small part of them can be easily studied within this framework. In the existing approach, when a same hidden message is embedded into several covertexts, information leak is studied in terms of probability. If the leak is important, then the scheme is considered as insecure. In this paper we are interested in the evaluation of unpredictability of a data hiding scheme: this last is considered as secure if it is proven to be unpredictable. This new framework can be used to study some classes of attacks that are difficult to investigate in the existing security approach. It also enriches the variety of qualitative and quantitative tools that evaluate how strong the security is, thus reinforcing the confidence that can be put on a given scheme. Indeed, let us suppose that Eve, an attacker, observes the behavior of a data hiding machine. If there is no information leak, when Eve applies an input on the machine, and if nothing can be understanding when the input is changing, because of the unpredictable behavior of the machine, so she cannot deduce much things from these observations. This claim is discussed more rigorously in Section~\ref{SS}.

The rest of this paper is organized as follows. In Section~\ref{Refs}, related work concerning data hiding security is recalled. Our contribution in relation to existing framework is explained. In Section~\ref{section:Chaos}, the definition of chaos used in this paper is presented, the notion of chaotic-security is defined and the feasibility of its study in real-world applications is established. An illustrative example is given in Section~\ref{SS}: the chaotic-security study of various spread-spectrum data hiding schemes is presented and the level of security is qualitatively and quantitatively evaluated. The link between chaotic-security and stego-security is discussed and the impact of our study in regard to various classes of attacks is detailed. The paper ends with a conclusion where our contribution is summarized, and planned future work is discussed.

\section{Related work and contribution}
\label{Refs}
\subsection{Related work}
\label{raleted}

The first fundamental work in security was made by Cachin in the late 90's~\cite{Cachin2004} in the context of steganography. Cachin interprets the attempt of an attacker to distinguish between an innocent image and a stego-content as a hypothesis testing problem. In this document, the basic properties of a stegosystem are defined using the notions of entropy, mutual information, and relative entropy. Mittelholzer~\cite{Mittelholzer99}, inspired by the work of Cachin, proposed the first theoretical framework for analyzing the security of a watermarking scheme.

These efforts to bring a theoretical framework for security in steganography and watermarking, have been followed up by Kalker~\cite{Kalker2001} who tries to clarify the concepts (robustness \emph{vs.} security) and the classifications of watermarking attacks. This work has been deepened by Furon \emph{et al.}~\cite{Furon2002}, who have translated Kerckhoffs' principle (Alice and Bob shall only rely on some previously shared secret for privacy) from cryptography to data hiding. They used Diffie and Hellman methodology, and Shannon's cryptographic framework~\cite{Shannon49}.

These attacks have been classified into categories, according to the type of information Eve has access to~\cite{Cayre2005},~\cite{Perez06}:
\begin{itemize}
\item Watermarked Only Attack (WOA): the attacker has access only to watermarked contents.
\item Known Message Attack (KMA): the attacker has access to pairs of watermarked contents and corresponding hidden messages.
\item Known Original Attack (KOA): occurs when an attacker has access to several pairs of watermarked contents and their corresponding original versions.
\item Constant-Message Attack (CMA): the attacker observes several watermarked contents and only knows that the unknown hidden message is the same in all contents.
\end{itemize}

Each category of attack conducts to different security classes. For example, four classes of security are defined in \cite{Cayre2008} for WOA, namely insecurity, key-security, subspace-security and stego-security.

Barni \emph{et al.} proposed a different approach to watermarking security, based on games with some rules concerning the public available information~\cite{BarniBF03}, conducting to a definition of security level similar to that proposed by Furon. Lastly, Cayre \emph{et al.} proposed in~\cite{Cayre2005} the Fisher Information Matrix to quantify security in this context.

\subsection{Some reasons why unpredictability can improve security}

Stego-security is clearly relevant and required in WOA setup: Eve has only access to watermarked contents and due to stego-security, it is impossible for her to decide whether a content has been processed through the embedding function or not. So in WOA setup, a stego-secure algorithm can face Eve's attacks. However, such a framework is not as useful to counteract KOA, KMA and CMA classes of attacks. In these setups, Eve tries to take benefits of its observations of watermarked contents, when she changes some  initial conditions in the data hiding scheme. She desires having a sufficient understanding of the scheme and to be able to predict its behavior.

This knowledge can serve an attacker in various situations, for example when trying to counteract digital rights management (DRM), or in a man-in-the-middle attack through an hidden channel. Let us explain for example how Eve can try to achieve a man-in-the-middle attack by taking benefits of the predictable behavior of a scheme in KOA setup. We suppose that Alice and Bob communicate together through an hidden channel into some innocent master paintings. Eve has thus access to original and watermarked paintings, by using a base of knowledge and observing the communication channel. If she is able to predict the behavior of the data hiding scheme used by Alice and Bob, so when Alice send a watermarked painting $P_o$ to Bob: Eve intercepts $P_o$, use the same painting $P$ than Alice, and tries to predict how its own message should change $P$. The result of this prediction is sent to Bob. It is true that the chances for success for this attack are low, but these chances increase with the predictability of the data hiding scheme, thus revealing a security failure.

Let us now suppose that Eve wants to attack some DRM. She has access to several pairs of watermarks - the copyrights, which are public - and watermarked media: we are in KMA setup. In addition, she has access to the data hiding machine, but does not have knowledge of a secret key parameter, from which the outputs of the scheme are dependent. This key thus determines the way to apply the copyright into the media. She wants to insert its own copyright in this protected media, to make it impossible to determine whether Eve is the owner or not. She does not know exactly how copyrights are applied on original media, because the DRM machine works with a secret key. However Eve can reach its goal if she is able to predict the behavior of the copyright machine: she can approximately determine what should be the watermarked media with its own copyright. As a conclusion, in various situations, unpredictable schemes should be recommended to improve security in data hiding. 

\subsection{Contributions in this paper}
\label{contrib}
In this paper, a novel theoretical framework for data hiding security is proposed. A data hiding scheme is considered here as a machine, whose detail is public. This machine receives hidden messages and original contents from the outside world and returns stego-contents. In our point of view, security of the scheme depends on the unpredictable behavior of the machine. To give consistence to the notion of unpredictability, this machine is modeled as a dynamical system: $x^{n+1}=f(x^{n})$ ~(where $x^n$ denotes the $n^{th}$ term of the sequence $x$). This reformulation is always possible, as it is proven in Section~\ref{CS}.

Thus, unpredictability refers to some topological or ergodic aspects of $f$ taken from the mathematical theory of chaos, as defined by Devaney \cite{Devaney}, Li-Yorke~\cite{Li75}, or Adler~\cite{Adler65} for example. This new theoretical framework for security respects Kerckhoffs'principle. It is based on a topological description of data hiding, whereas most studies in this field usually have used the theory of probability~\cite{Perez-Freire06},~\cite{Furon05}. The goal of this research work is to give additional contribution to the variety of security evaluations which should lead to better confidence into data hiding schemes.

Compared to the information-theoretic model for steganography proposed by Cachin \cite{Cachin2004} and extended for example by Ker~\cite{Ker06}, chaotic-security appears to be a little more relevant and probably more realizable than stego-security in some particular situations, for example when chaotic sequences are used in data hiding schemes. This framework can check whether the claim of a chaotic behavior for a data hiding scheme can be verified or not. Indeed, to our best knowledge, stego-security studies often take place in WOA category and are related to Simmons' prisoners' problem~\cite{Simmons83}. In this problem, Alice and Bob are in jail and want to plan an escape by exchanging hidden messages in innocent-looking cover contents. These messages are conveyed by Eve, a warden who tries to benefit from any information leak resulting by the use of the same secret key. Quoting Cayre and Bas in~\cite{Cayre2008}: ``Like other works, we consider Alice and Bob use only one key. Of course, in real applications, especially in steganography, it is highly desirable to change the key at every communication between Alice and Bob.'' In addition, a probabilistic model of the covertext is needed and, as stated by Cachin in \cite{Cachin2004}, ``assuming the existence of a covertext distribution seems to render our model somewhat unrealistic for the practical purpose of steganography''. The new framework we propose does not suppose any assumption of this kind, works with simple or multiple secret keys and is not restricted to some category of attacks. Our approach is explained in detail in the following section.


\section{Chaos for data hiding security}
\label{section:Chaos}

We will consider that a data hiding scheme is secure when its behavior is unpredictable. The mathematical domain that studies unpredictability is the theory of chaos, which describes the behavior of a dynamical system in topological terms. One of the most reputed description of such a chaotic behavior is due to Devaney~\cite{Devaney}. It is recalled in the next subsection, whereas the notion of chaotic-security deduced from this definition is presented in Subsection~\ref{CS}.

\subsection{Devaney's chaotic dynamical systems}
\label{subsection:Devaney}

Consider a metric space $(\mathcal{X},d)$ and a continuous function $f$ on $\mathcal{X}$. Let $f^{k}=f\circ ...\circ f$ denotes the $k^{th}$ composition of a function $f$. Quoting Devaney in~\cite{Devaney},

\begin{definition}
$f$ is said to be \emph{topologically transitive} if, for any pair of open sets $U,V \subset \mathcal{X}$, there exists $k>0$ such that $f^k(U) \cap V \neq \varnothing$.
\end{definition}

\begin{definition}
An element (a point) $x$ is a \emph{periodic element} (point) for $f$ of period $n\in \mathds{N}^*,$ if $f^{n}(x)=x$. The set of periodic points of $f$ is denoted $Per(f).$
\end{definition}

\begin{definition}
$\{\mathcal{X},f\}$ is said to be \emph{regular} if the set of periodic points is dense in $\mathcal{X}$,
\begin{equation*}
\forall x\in \mathcal{X},\forall \varepsilon >0,\exists p\in Per(f),d(x,p)\leqslant \varepsilon .
\end{equation*}
\end{definition}

\begin{definition}
\label{sensitivity} $f$ has \emph{sensitive dependence on initial conditions}
if there exists $\delta >0$ such that, for any $x\in \mathcal{X}$ and any neighborhood $V$ of $x$, there exists $y\in V$ and $n\geqslant 0$ such that $|f^{n}(x)-f^{n}(y)|>\delta $.

$\delta$ is called the \emph{constant of sensitivity} of $f$.
\end{definition}

\begin{definition}
A function $f:\mathcal{X}\longrightarrow \mathcal{X}$ is said to be \emph{chaotic} on $\mathcal{X}$ if $\{\mathcal{X},f\}$ is regular, topologically transitive and has sensitive dependence on initial conditions.
\end{definition}

When $f$ is chaotic, then the system $\{\mathcal{X}, f\}$ is chaotic and quoting Devaney~\cite{Devaney}: ``it is unpredictable because of the sensitive dependence on initial conditions. It cannot be bren down or simplified into two subsystems which do not interact because of topological transitivity. And in the midst of this random behavior, we nevertheless have an element of regularity''. Fundamentally different behaviors are consequently possible and occurs in an unpredictable way.

\subsection{Chaotic-security}
\label{CS}

As stated before, we believe that an unpredictable behavior is required for a data hiding scheme to satisfy an efficient level of security. This unpredictability makes it difficult to determine whose coefficients of the cover media will be altered during the embedding of the watermark, which limits the possibilities of Eve in KOA and KMA attacks. The scheme should at least be chaotic according to Devaney: this property will improve the ability of Alice and Bob to withstand attacks. Indeed, it will be as difficult for Eve to find the hidden message after $n$ iterations than to forecast the weather after $n$ days without mistakes: because of chaos, this last becomes impossible to do in practice when $n$ increases. Due to Devaney's chaos, such a ``chaotic-secure'' information hiding scheme will thus satisfy sensitive dependence to the initial condition, regularity and transitivity. Sensitivity to initial conditions is useful, among other, to withstand sensitivity attack~\cite{ComesanaPP05} (it can be noticed that, in the context of the sensitivity attack, various strategies have been already employed~\cite{Furon08}). In addition, sensitivity is helpful to achieve authentication, because the watermark's embedding will be highly dependent on any changes of the carrier image. Moreover, fragile data hiding is achieved with a large constant of sensitivity. 
Transitivity trends to improves robustness: for example, Eve cannot hope to remove the watermark by cropping the media. Indeed, the system will visit all the space, so the watermark will be uniformly distributed on the whole media. This property improves authentication, as this last can be achieved by studying any part of the media: theoretically speaking, authentication still remains possible in a cropped media. Transitivity trends to improve security too, as Eve cannot have a better understanding of the scheme, cannot reduce its complexity by studying only a well chosen reduced part of the watermarked content. 
Lastly, transitivity and regularity lead to unpredictability, which helps Alice and Bob to withstands KOA and KMA attacks.
For these reasons we believe that this new point of view could enrich the field of security in data hiding.

\medskip

Let us now present more rigorously the new notion of chaotic-security. To check whether an existing data hiding scheme is chaotic or not, we propose firstly to write it as an iterate process $x^{n+1}=f(x^n)$. It is possible to prove that this formulation can always be done. Let us consider a given data hiding algorithm. It is always possible to translate it as a Turing machine and this last can be written as $x^{n+1} = f(x^n)$ by the following way. Let $(w,i,q)$ be the current configuration of the Turing machine (Fig.~\ref{Turing}), where $w=\sharp^{-\omega} w(0) \hdots w(k)\sharp^{\omega}$ is the paper tape, $i$ is the position of the tape head, $q$ is used for the state of the machine, and $\delta$ is its transition function. We define $f$ by:
\begin{itemize}
\item $f(w(0) \hdots w(k),i,q) = ( w(0) \hdots w(i-1)aw(i+1)w(k),i+1,q')$, \newline if  $\delta(q,w(i)) = (q',a,\rightarrow)$,
\item $f( w(0) \hdots w(k),i,q) = (w(0) \hdots w(i-1)aw(i+1)w(k),i-1,q')$, \newline if $\delta(q,w(i)) = (q',a,\leftarrow)$.
\end{itemize}
Thus the Turing machine can be written as an iterate function $x^{n+1}=f(x^n)$ on a well-defined set $\mathcal{X}$, with $x^0$ as the initial configuration of the machine. We denote by $\mathcal{T}(S)$ the iterative process of a data hiding scheme $S$.

\begin{figure}[h]
  \centering
\includegraphics[scale=0.5]{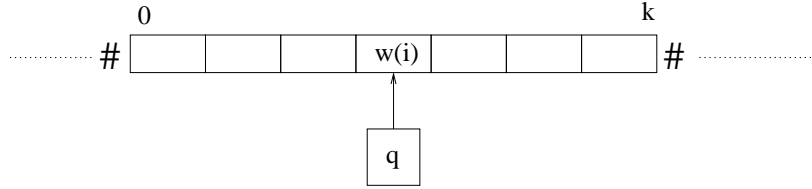}
\caption{Turing Machine}
\label{Turing}
\end{figure}

Let $d$ be a relevant distance on $\mathcal{X}$. So the behavior of this dynamical system can be studied to know whether the data hiding scheme is unpredictable or not. This leads to the following definition.

\begin{definition}
An information hiding scheme $S$ is said to be chaotic-secure on $(\mathcal{X},d)$ if its iterative process $\mathcal{T}(S)$ has a chaotic behavior according to Devaney.
\end{definition}

Theoretically speaking, chaotic security can always be studied, as it only requires that the two following points are satisfied.
\begin{itemize}
\item Firstly, the data hiding scheme must be written as an iterate function on a set $\mathcal{X}$. We have stated that this is possible for any given scheme.
\item Secondly, a metric or a topology must be defined on $\mathcal{X}$. This is always possible, for example by taking either the discrete or the trivial topology, even though these last are not really relevant for the aims we want to reach.
\end{itemize}

Chaotic-security is clearly impacted by the choice of the distance or topology on $\mathcal{X}$ and this dependence must be regarded with attention. It is evident that the choice of the metric (or of the topology) must be justified, for example by establishing a strong link between the proximity of two points and the aims that data hiding attempts to reach.
However, some topologies are more natural and reasonable than others, and equivalence of topologies reduces the impact of this choice. In addition, it can be remarked that stego-security supposes the same kind of hypothesis: dealing with probabilities implies the definition of a sigma-algebra. To our best knowledge, the Borel algebra is always chosen, even though this choice is neither stated, nor justified. As a topological space is needed to define Borel sets, we can claim that at least when a stego-security study is possible, so a chaotic study can be realized, with a topology inherited from the stego-security study (which justifies its choice). 

\medskip

In our point of view, the chaotic-security is the lowest level of security for a data hiding scheme in terms of unpredictability. This property is required, but is not sufficient: it is only the first stage of the evaluation of the unpredictable behavior of the scheme. This study must be followed by the establishment of the list of chaotic properties that the system presents. Indeed, being unpredictable is a tricky thing to define and the number of candidates that give consistence to this notion is large. Namely topological and metric entropy, ergodicity, topological mixing, lyapunov exponent, expansivity, transitivity and strong transitivity, bifurcation theory, or chaos as defined by Li-Yorke, by Devaney or by Knudsen to name a few. As each definition illustrates a particular aspect of an unpredictable behavior and has its own interest, each notion of chaos offers a new light on the security of a data hiding scheme. Thus we consider that a given data hiding scheme will be more secure than another one if it presents a larger number of chaotic qualities and if its quantitative values are better. Indeed the properties to check is depending on the aims to reach: fragile watermarking, robustness, \emph{etc.} This point is illustrated in the next section and will be largely deepened in a future work.

\section{Chaotic-security of spread-spectrum data hiding schemes}
\label{SS}

\subsection{A first proof of chaotic-security}
\label{spread}


In what follows, our framework is used to give a first chaotic-security evaluation of the well-known spread-spectrum (SS) data hiding techniques~\cite{Cayre2008}. This proves that the previous framework is ready for real-world applications and establishes a concrete link between chaotic and stego-security notions.

\medskip

Let $x \in \mathds{R}^{N_v}$ be an host vector in which we want to hide a message $m\in \{0;1\}^{N_c}$. $N_c$ is the size of the hidden payload (in bits) and $N_v$ the size of the stego or host vector (in samples). A key $\mathcal{K}$ is used to initialize a PRNG (Pseudo-Random Number Generator) to obtain $N_c$ secret carries $\{u^i\}$ taken in $\mathds{R}^{N_v}$. Thus in classical SS the watermark signal $w$ is constructed as follows:
\begin{equation}
\displaystyle{w = \sum_{i=0}^{N_c-1} \gamma (-1)^{m^i} u^i}
\end{equation}

\noindent where $\gamma$ is a given distortion level. The watermarked signal $y$ is then defined by:
\begin{equation}
y = x + w
\end{equation}

Let us now suppose that the components of the watermark are bounded by a finite value $\mathsf{N}$: max$\left(\{ w_i, i \in \llbracket 1, N_v \rrbracket \}\right) \leqslant \mathsf{N}$. This bound can be as large as needed, however a very large $\mathsf{N}$ seems to be contradictory with the aims of a data hiding scheme.
Let us consider $\mathcal{X}=\left(\left[0,\mathsf{N}\right]^{N_v}\right)^\mathds{N}\times \mathds{R}^{N_v}$ and
\begin{equation}
G((S,E)) = (\sigma(S) ; i(S) + E)
\end{equation}

\noindent where $\sigma$ is the \emph{shift} function defined by $\sigma :(S^{n})_{n\in \mathds{N}}\in \left(\left[0,\mathsf{N}\right]^{N_v}\right)^\mathds{N} \rightarrow (S^{n+1})_{n\in \mathds{N}}\in  \left(\left[0,\mathsf{N}\right]^{N_v}\right)^\mathds{N} $ and the \emph{initial function} $i$ is the map which associates to a sequence, its first term: $i:(S^{n})_{n\in \mathds{N}}\in  \left(\left[0,\mathsf{N}\right]^{N_v}\right)^\mathds{N}\rightarrow S^{0}\in  [0;\mathsf{N}]^{N_v}$.

Spread-spectrum data hiding techniques are thus the result of $N_c$ iterations of the following dynamical system:
\begin{equation}
\left\{
\begin{array}{l}
X^0 \in \mathcal{X},\\
X^{n+1} = G(X^n),
\end{array}
\right.
\end{equation}

\noindent and the watermarked media is the second component of $X^{N_c}$.

Classical SS, \emph{i.e.} with BPSK modulation~\cite{Cayre2008}, is defined by $X^0 = (S^0, E^0)$ where $E^0$ is the host vector $x$ and $S^0$ is the sequence 
\begin{equation}
\left((-1)^{m^0} \gamma ~u^0, (-1)^{m^1} \gamma ~u^1, \hdots, (-1)^{m^{N_c-1}} \gamma ~u^{N_c-1} \right),
\end{equation}

\noindent in which $\gamma$ allows to achieve a given distortion, whereas in ISS (Improved Spread Spectrum~\cite{Malvar03}), $S^0$ is defined by
\begin{equation}
\left( (-1)^{m^i} \alpha  - \dfrac{<x,u^i>}{||u^i||^2} \right)_{i=0, \hdots, N_c-1},
\end{equation}

\noindent where $\alpha$ and $\lambda$ are computed to achieve an average distortion and to minimize the error probability~\cite{Cayre2008}. Lastly, in natural watermarking NW, $S^0$ is defined by
\begin{equation}
\left(- \left( 1 + \eta (-1)^{m^i} \dfrac{<x,u^i>}{|<x,u^i>|} \right) \dfrac{<x,u^i>}{||u^i||^2}
\right)_{i=0, \hdots, N_c-1}.
\end{equation}

This modulation consists in a model-based projection on the different vectors $u^i$ followed by a $\eta-$scaling along the direction of $u^i$. Natural watermarking has been proven stego-secure in~\cite{Cayre2008}.

\bigskip

We will prove in what follows that spread-spectrum data hiding schemes are chaotic-secure, \emph{i.e.} that $G$ is chaotic on $(\mathcal{X},d)$, thus finding a first chaotic and stego-secure algorithm.

Let $d_{\infty}(A,B) = max\left\{ |A_i-B_i|, i=1\hdots N_v\right\}$ be one of the usual metrics on $\mathds{R}^{N_v}$. We define a new distance between two points $X = (S,E), Y = (\check{S},\check{E})\in
\mathcal{X}$ by $d(X,Y)=d_{\infty}(E,\check{E})+d_s(S,\check{S}),$ where $\displaystyle{d_s(S,\check{S})} = \displaystyle{\dfrac{9}{\mathsf{N}}%
\sum_{k=0}^{\infty }\dfrac{d_{\infty}(S^k, \check{S}^k)}{10^{k}}}.$

The choice of $d_{\infty}$ on $\mathds{R}^{N_v}$ is not important, because of the equivalence of norms in finite dimension: as topologies are the same, thus chaotic properties does not change by using another distance on $\mathds{R}^{N_v}$. $d_s$ has been chosen such that $d(X,Y)$ is small when the distance between the watermarked images resulting on the spread-spectrum applied on $X$ and $Y$ are close (for any metrics on $\mathds{R}^{N_v}$, as they are all equivalent). Lastly, $\dfrac{9}{\mathsf{N}}$ is just a normalization value.

We will now prove that:

\begin{proposition}
$G$ is continuous on ($\mathcal{X},d$).
\end{proposition}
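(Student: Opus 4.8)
The plan is to show continuity of $G$ by fixing an arbitrary point $X=(S,E)\in\mathcal{X}$ and bounding $d(G(X),G(Y))$ in terms of $d(X,Y)$ for an arbitrary $Y=(\check S,\check E)$; in fact I expect to reach a global Lipschitz estimate, which is more than enough to conclude. Since $G((S,E))=(\sigma(S),i(S)+E)$ and the metric $d$ on $\mathcal{X}$ is the \emph{sum} $d_\infty(E,\check E)+d_s(S,\check S)$ of the two coordinate metrics, it suffices to control the two components of $G(X)$ versus $G(Y)$ separately: the $\mathds{R}^{N_v}$-component $i(S)+E$ against $i(\check S)+\check E$, and the sequence-component $\sigma(S)$ against $\sigma(\check S)$.

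First I would treat the real part. Using the triangle inequality for $d_\infty$ (which is induced by the sup-norm), and recalling $i(S)=S^0$, one gets $d_\infty(i(S)+E,\,i(\check S)+\check E)\leqslant d_\infty(S^0,\check S^0)+d_\infty(E,\check E)$. The summand $d_\infty(E,\check E)$ is already part of $d(X,Y)$, while $d_\infty(S^0,\check S^0)$ is dominated by the $k=0$ term of the series defining $d_s$: since every term is nonnegative, $d_\infty(S^0,\check S^0)\leqslant \tfrac{\mathsf{N}}{9}\,d_s(S,\check S)$. Hence the real component is Lipschitz in $d(X,Y)$.

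The sequence part is where the only genuine care is needed. By definition $d_s(\sigma(S),\sigma(\check S))=\tfrac{9}{\mathsf{N}}\sum_{k=0}^{\infty}\tfrac{d_\infty(S^{k+1},\check S^{k+1})}{10^{k}}$, and reindexing with $j=k+1$ rewrites this as $10$ times the tail $\tfrac{9}{\mathsf{N}}\sum_{j=1}^{\infty}\tfrac{d_\infty(S^{j},\check S^{j})}{10^{j}}$, that is, as $10\bigl(d_s(S,\check S)-\tfrac{9}{\mathsf{N}}d_\infty(S^0,\check S^0)\bigr)$. Because the subtracted term is nonnegative, this gives $d_s(\sigma(S),\sigma(\check S))\leqslant 10\,d_s(S,\check S)$. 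The point worth emphasizing is that the shift is \emph{not} a contraction — it can amplify distances by a factor up to $10$ — but the geometric weights $10^{-k}$ keep this amplification bounded, which is precisely what preserves continuity.

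Combining the two estimates yields $d(G(X),G(Y))\leqslant \bigl(\tfrac{\mathsf{N}}{9}+10\bigr)d_s(S,\check S)+d_\infty(E,\check E)\leqslant C\,d(X,Y)$ with $C=\tfrac{\mathsf{N}}{9}+10$ independent of $X$ and $Y$, so $G$ is Lipschitz and therefore continuous; an $\varepsilon$--$\delta$ statement follows at once by taking $\delta=\varepsilon/C$. The main obstacle is thus not conceptual but purely technical: one must carry out the reindexing of the series for $\sigma$ correctly, and resist expecting a contraction, recognizing instead that the bounded factor-$10$ blow-up does no harm to continuity.
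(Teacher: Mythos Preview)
Your proof is correct, and in fact stronger than what the paper does: the paper argues by sequential continuity, taking a sequence $(S_n,E_n)\to(S,E)$ and checking separately that $d_s(\sigma(S_n),\sigma(S))\to 0$ (via the tail of the series) and $d_\infty(i(S_n)+E_n,i(S)+E)\to 0$ (via the same triangle-inequality step you use, with the cruder bound $d_\infty(S_n^0,S^0)\leqslant d_s(S_n,S)$). You instead extract a global Lipschitz constant $C=\tfrac{\mathsf{N}}{9}+10$, which yields uniform continuity and an explicit $\varepsilon$--$\delta$ relation for free. The computational core --- triangle inequality on the $E$-part and reindexing of the geometric series for the shift --- is shared between the two arguments, so the difference is one of packaging; your version is tighter because it keeps track of the constants rather than just recording that each piece tends to zero.
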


\begin{proof}
We use the sequential continuity. Let $(S_n,E_n)_{n\in \mathds{N}}$ be a sequence of the phase space $\mathcal{X}$, which converges to $(S,E)$. We will prove that $\left( G(S_n,E_n)\right)_{n\in \mathds{N}}$ converges to $G(S,E)$. Let us recall that for all $n$, $S_n$ is a strategy, thus, we consider a sequence of strategies (\emph{i.e.} a sequence of sequences).

As $d((S_n,E_n);(S,E))$ converges to 0, each distance $d_{\infty}(E_n,E)$ and $d_s(S_n,S)$ converges to 0. 

\begin{enumerate}
\item If $\displaystyle{\dfrac{9}{\mathsf{N}}%
\sum_{k=0}^{\infty }\dfrac{d_{\infty}(S_n^k, S^k)}{10^{k}} \rightarrow 0}$ when $n \rightarrow \infty$, then $\displaystyle{\dfrac{9}{\mathsf{N}}\sum_{k=1}^{\infty }\dfrac{d_{\infty}(S_n^k, S^k)}{10^{k}} \rightarrow 0}$. So $\displaystyle{\dfrac{9}{\mathsf{N}}\sum_{k=0}^{\infty }\dfrac{d_{\infty}(S_n^{k+1}, S^{k+1})}{10^{k+1}}}$ $=\dfrac{1}{10} d_s(\sigma(S_n);\sigma(S)) \rightarrow 0$. As a consequence,\linebreak $d_s(\sigma(S_n),\sigma(S))$ converges to 0.

\item Let us prove that $d_{\infty}\left(i(S_n)+E_n;i(S)+E\right) \rightarrow 0$. 
$$
\begin{array}{ll}
d_{\infty}\left(i(S_n)+E_n;i(S)+E\right) & = max\left\{ \left|\left(i(S_n)_k+(E_n)_k\right) - \left( i(S)_k + E_k \right) \right|, k=1\hdots N_v\right\} \\
& = max\left\{ \left|\left(i(S_n)_k - i(S)_k \right) + \left( (E_n)_k - E_k \right) \right|, k=1\hdots N_v\right\}\\
& \leqslant  max\left\{ \left| i(S_n)_k - i(S)_k  \right|, k=1\hdots N_v\right\}$ + $d_{\infty} (E_n,E) \\
& = d_\infty ( S_n^0, S^0) + d_{\infty}(E_n,E) \\
&  \leqslant d_s (S_n, S) + d_{\infty} (E_n, E) \\
&  = d \left( (S_n,E_n) ; (S,E) \right) \rightarrow 0. 
\end{array}$$
\end{enumerate}
\end{proof}

\begin{proposition}
Periodic points of $G$ are dense in $\mathcal{X}$, so $G$ is regular.
\end{proposition}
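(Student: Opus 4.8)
The plan is to exploit the explicit form of the iterates of $G$. A direct induction gives
\[
G^n(S,E) = \left( \sigma^n(S),\ E + \sum_{k=0}^{n-1} S^k \right),
\]
so $(S,E)$ is a periodic point of period $n$ precisely when the two conditions $\sigma^n(S)=S$ and $\sum_{k=0}^{n-1} S^k = 0$ hold simultaneously. The first condition merely says that $S$ is an $n$-periodic strategy; the second is the genuine constraint, coupling the strategy to the additive dynamics on the $\mathds{R}^{N_v}$ component. Density will therefore reduce to the following: given an arbitrary $X=(S,E)\in\mathcal{X}$ and $\varepsilon>0$, build a periodic strategy $\check{S}$ that agrees with $S$ on a long initial block, whose one-period sum vanishes, while keeping $\check{E}=E$.

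First I would set $\check{E}=E$, so that $d_\infty(E,\check{E})=0$ and the whole distance comes from $d_s(S,\check{S})$. Since each $d_{\infty}(S^k,\check{S}^k)$ is bounded by the diameter of the coordinate range, the tail $\frac{9}{\mathsf{N}}\sum_{k\geqslant K}\frac{d_\infty(S^k,\check S^k)}{10^k}$ is a geometric remainder of order $10^{-K}$, and I would pick $K$ large enough that this quantity is below $\varepsilon$. Then I define $\check{S}^k=S^k$ for $0\leqslant k<K$, which makes the first $K$ terms of $d_s$ vanish, and I choose a period $n\geqslant 2K$ together with correction terms $\check S^K,\dots,\check S^{n-1}$ whose sum cancels $\Sigma := \sum_{k=0}^{K-1}S^k$; finally I extend $\check S$ periodically with period $n$. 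By construction $\sigma^n(\check S)=\check S$ and $\sum_{k=0}^{n-1}\check S^k = \Sigma + \bigl( \sum_{k=K}^{n-1}\check S^k \bigr) = 0$, so $(\check S, E)$ is a periodic point, and $d(X,(\check S,E)) = d_s(S,\check S)$ is at most the chosen tail, hence below $\varepsilon$.

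The step that deserves the most care — and which I expect to be the only real obstacle — is producing the correction block $\check S^K,\dots,\check S^{n-1}$ \emph{inside the admissible coordinate range} while cancelling $\Sigma$. Each component of $\Sigma$ lies in an interval of length proportional to $K$, so realizing its opposite as a sum of $n-K$ admissible vectors requires enough room, which is why I would allow $n-K\geqslant K$; the simplest explicit choice is $n=2K$ with $\check S^{K+i}=-S^i$ for $0\leqslant i<K$. This works provided the coordinate interval is symmetric about $0$, so that each $-S^i$ is again admissible. With the stated range this means reading the bound $\mathsf{N}$ as a bound on $|w_i|$, i.e. taking coordinates in $[-\mathsf{N},\mathsf{N}]$, since for a one-sided range $[0,\mathsf{N}]$ the constraint $\sum_{k} S^k=0$ would force the zero strategy and density would collapse. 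Once this range point is settled, the verification of both periodicity conditions and of the $\varepsilon$-bound is routine.
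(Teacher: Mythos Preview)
Your argument is correct and is in fact more careful than the paper's. You first compute $G^n(S,E)=\bigl(\sigma^n(S),\,E+\sum_{k=0}^{n-1}S^k\bigr)$ and deduce that a periodic point must satisfy both $\sigma^n(S)=S$ and $\sum_{k=0}^{n-1}S^k=0$; then your reflection construction $\check S^{K+i}=-S^i$ extended with period $2K$ satisfies both conditions exactly, and the tail estimate for $d_s$ gives the required $\varepsilon$-closeness.

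The paper proceeds differently: it keeps $\check E=E$, matches $S$ on an initial block of length $n$, and then appends the alternating tail $(\mathsf{N},\ldots,\mathsf{N}),(-\mathsf{N},\ldots,-\mathsf{N}),(\mathsf{N},\ldots,\mathsf{N}),\ldots$ according to the parity of the index. That tail makes the pair $(\pm\mathsf{N})$ cancel in the $E$-component, but the resulting strategy is only \emph{eventually} $2$-periodic, not periodic, so strictly speaking the point produced is pre-periodic rather than periodic under $G$; your construction closes this gap by making the whole strategy periodic with vanishing one-period sum. What the paper's choice buys is that the correction block is universal (independent of $S$), whereas yours is tailored to cancel $\Sigma$; what your choice buys is that both conditions $\sigma^n(\check S)=\check S$ and $\sum_k\check S^k=0$ are verified on the nose.

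Your observation about the coordinate range is also on point and applies equally to the paper's argument: the paper's alternating tail uses $(-\mathsf{N},\ldots,-\mathsf{N})$, so it too relies on signed entries. Reading the bound as $|w_i|\leqslant\mathsf{N}$, i.e.\ coordinates in $[-\mathsf{N},\mathsf{N}]$, is the intended interpretation (consistent with the watermark $w=\sum\gamma(-1)^{m^i}u^i$ having signed components); with the literal range $[0,\mathsf{N}]$ the zero-sum condition would indeed force the trivial strategy and regularity would fail.
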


\begin{proof}
Let $(S,E) \in \mathcal{X}$ and $\varepsilon > 0$. We are looking for a periodic point $(\check{S}, \check{E}) \in \mathcal{X}$ such that $d\left((S,E), (\check{S},\check{E})\right) < \varepsilon$. Let $\check{E} = E$ and $S_n$ denotes the sequence defined by:
$$\left\{ 
\begin{array}{ll}
S_n^k = S^k & \forall k \leqslant n \\
S_n^k = (\mathsf{N}, \hdots, \mathsf{N}) & \textrm{if } k>n \textrm{ and } k \equiv 0 \textrm{ (mod } 2 \textrm{)} \\
S_n^k = (-\mathsf{N}, \hdots, -\mathsf{N}) & \textrm{else.}
\end{array}
\right.$$
Then $d_s(S_n, S) = \displaystyle{\dfrac{9}{\mathsf{N}}\sum_{k=n+1}^{\infty }\dfrac{d_{\infty}(S_n^k, S^k)}{10^{k}}} \leqslant \displaystyle{\dfrac{9}{\mathsf{N}}\sum_{k=n+1}^{\infty }\dfrac{\mathsf{N}}{10^k}} = \dfrac{1}{10^n} \rightarrow 0$ when \linebreak $n \rightarrow \infty$. So $\exists n_0 \in \mathds{N}$ such that $d_s(S^{n_0}, S) < \varepsilon$. The point $(S^{n_0}, E)$ is then a periodic point of $\mathcal{X}$ which is $\varepsilon$-close to the given point $(S,E)$.
\end{proof}

We will now prove that,

\begin{proposition}
$G$ is transitive on $\mathcal{X}$.
\end{proposition}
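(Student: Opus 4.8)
The plan is to prove transitivity directly by the explicit orbit construction that is standard for shift-based systems: given two nonempty open sets $U,V$, I will build a single point of $U$ whose orbit under $G$ lands in $V$. The first ingredient I need is a closed form for the iterates, which a short induction supplies:
\begin{equation*}
G^n(S,E) = \left( \sigma^n(S), \ E + \sum_{k=0}^{n-1} S^k \right).
\end{equation*}
Thus $G^n$ shifts the strategy $n$ places and adds to the host component $E$ the sum of the first $n$ strategy terms. Everything rests on this identity: the first coordinate can be set to anything we like by prescribing the tail of $\hat{S}$, while the second coordinate is steered by the accumulated sum.

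Next I would fix balls $B\big((S_1,E_1),\varepsilon_1\big)\subset U$ and $B\big((S_2,E_2),\varepsilon_2\big)\subset V$ and construct a point $\hat{X}=(\hat{S},\hat{E})$ in three blocks. First, set $\hat{E}=E_1$ and $\hat{S}^k = S_1^k$ for $k=0,\dots,n_1$, where $n_1$ is chosen so large that the geometric tail bound $\tfrac{9}{\mathsf{N}}\sum_{k>n_1}\tfrac{2\mathsf{N}}{10^k}=2\cdot 10^{-n_1}$ is below $\varepsilon_1$; this guarantees $d\big(\hat{X},(S_1,E_1)\big)<\varepsilon_1$ regardless of how the later terms are filled, so $\hat{X}\in U$. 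Second, at the tail I set $\hat{S}^{\,n+k}=S_2^{k}$ for all $k\geqslant 0$, so that $\sigma^n(\hat{S})=S_2$ exactly and the first coordinate of $G^n(\hat{X})$ matches $X_2$ perfectly. The remaining freedom lies in the transition terms $\hat{S}^{n_1+1},\dots,\hat{S}^{n-1}$, which I reserve for controlling the second coordinate.

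For that second coordinate, the identity above shows the $E$-part of $G^n(\hat{X})$ equals $E_1 + \sum_{k=0}^{n_1} S_1^k + \sum_{k=n_1+1}^{n-1}\hat{S}^k$. Writing $A=\sum_{k=0}^{n_1}S_1^k$ (a now-fixed vector) and setting the target $T=E_2-E_1-A$, I need the transition block to sum to $T$. Here is the key point: exactly as in the regularity proof, each strategy coordinate ranges over values of both signs up to magnitude $\mathsf{N}$, so a block of $m=n-1-n_1$ terms can realize any vector whose every coordinate lies in $[-m\mathsf{N},m\mathsf{N}]$. I therefore choose $n$ large enough that $m\mathsf{N}\geqslant \|T\|_\infty$ and distribute $T$ across the transition terms; then $G^n(\hat{X})=(S_2,E_2)\in V$, giving $G^n(U)\cap V\neq\varnothing$.

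The step needing the most care is the order of the choices, which must avoid circularity: $n_1$ (hence $A$ and $T$) is fixed first, using only $\varepsilon_1$, and only afterwards is $n$ taken large enough to steer the accumulated sum to $T$. The one genuine subtlety, and the point on which transitivity really hinges, is that the $E$-component is unbounded and driven purely by accumulating strategy terms; transitivity can hold only because those terms are two-signed (as the $\pm\mathsf{N}$ values already used in the regularity proof indicate), since a one-signed alphabet could never decrease a coordinate of $E$ and the map would then fail to be transitive.
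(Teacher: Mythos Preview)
Your argument is correct and follows the same architecture as the paper's: copy a long enough prefix of $S_1$ to force membership in the first ball, append a steering block that drives the $E$-component to $E_2$, then append $S_2$ verbatim so that the shifted strategy matches exactly. The paper's proof does precisely this, with the steering block consisting of exactly $N_v$ steps, the $k$-th of which corrects only the $k$-th coordinate of $E$ in one shot via the vector $(-\check{E}_k+E_{B,k})\,e_k$.

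The one substantive difference is that you explicitly keep the steering terms inside the strategy alphabet by taking enough transition steps so that each coordinate of the required correction $T$ is at most $m\mathsf{N}$ in absolute value, and then distribute $T$ across the block. The paper, by contrast, writes each coordinate correction as a single strategy entry without checking that $|-\check{E}_k+E_{B,k}|\leqslant \mathsf{N}$; since $\check{E},E_B\in\mathds{R}^{N_v}$ are unconstrained, that value can certainly exceed $\mathsf{N}$. So your version closes a gap the paper glosses over, at the modest cost of a slightly longer orbit; the paper's fixed-length steering block buys brevity but, as written, silently steps outside the declared phase space. Your remark that two-signedness of the alphabet is essential is also well taken and is exactly what the paper implicitly relies on (cf.\ its use of $(-\mathsf{N},\dots,-\mathsf{N})$ in the regularity proof).
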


\begin{proof}
Let $B_A = \mathcal{B}\left(X_A,r_A\right)$ and $B_B = \mathcal{B}\left(X_B,r_B\right)$ be two open balls of $\mathcal{X}$, where $X_A=(S_A,E_A)$ and $X_B=(S_B,E_B)$. We are looking for $\tilde{X}=(\tilde{S},\tilde{E}) \in B_A$ such that $\exists n_0 \in \mathds{N}, G^{n_0} ( \tilde{X} ) \in B_B$.

Let $k_0 \in \mathds{Z}$ such that $10^{-k_0}\leqslant r_A < 10^{-k_0+1}$ and $\left(\check{S}, \check{E}\right) = G^{k_0} \left( X_A \right)$. We define $\tilde{X} = (\tilde{S},\tilde{E})$ as below:
\begin{itemize}
\item $\tilde{E} = E_A$,
\item $\forall k \leqslant k_0, \tilde{S}^k = S_A^k$,
\item $\forall k \in \llbracket 1, N_v \rrbracket, \tilde{S}^{k_0+k} = (-\check{E}^k+E_B^k) \times (0, \hdots, 0, 1, 0,\hdots, 0)$, \emph{i.e.} the vector $\tilde{S}^{k_0+k}$ has its components null, except its $k^{th}$, equals to $(-\check{E}^k+E_B^k)$,
\item $\forall k \in \mathds{N}, \tilde{S}^{k_0+N_v+k+1} = S_B^k$.
\end{itemize}
With such a definition, $\tilde{X}$ is in $B_A$ and satisfies $G^{k_0+N_v} \left(\tilde{X} \right) \in B_B$.
\end{proof}

As $G$ is regular and transitive on $(\mathcal{X},d)$, we can conclude that $G$ is sensitive to initial conditions (using the result of Banks~\cite{Banks92}), thus proving that $G$ is chaotic in the meaning of Devaney. As a conclusion,

\begin{theorem}
Spread-spectrum data hiding techniques are chaotic-secure.
\end{theorem}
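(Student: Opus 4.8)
The plan is to read the theorem as the synthesis of the three propositions just established, combined with Banks' implication and the definition of chaotic-security. First I would observe that each of the three variants treated here — classical SS with BPSK modulation, ISS, and natural watermarking — is realized as one and the same iterative process $X^{n+1} = G(X^n)$ on the phase space $\mathcal{X} = \left(\left[0,\mathsf{N}\right]^{N_v}\right)^{\mathds{N}} \times \mathds{R}^{N_v}$; only the initial condition $X^0 = (S^0, E^0)$ differs from one scheme to the next. Hence $\mathcal{T}(S) = G$ for all three, and since Devaney chaos is a property of the map $G$ itself and not of any single orbit, it is enough to prove that $G$ is chaotic on $(\mathcal{X}, d)$ to settle all three cases at once.

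Next I would unwind the definition of chaotic-security: $S$ is chaotic-secure precisely when $\mathcal{T}(S) = G$ is chaotic according to Devaney, that is, continuous, regular, topologically transitive, and sensitively dependent on initial conditions. The first three of these are exactly the content of the preceding propositions — continuity of $G$ on $(\mathcal{X}, d)$, density of $Per(G)$ in $\mathcal{X}$, and topological transitivity of $G$ — so the only item left to supply is sensitive dependence.

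For sensitivity I would not argue by hand but invoke the theorem of Banks \cite{Banks92}: on an infinite metric space, any continuous map that is simultaneously topologically transitive and regular necessarily exhibits sensitive dependence on initial conditions. Since $\mathcal{X}$ is plainly infinite (it carries both a factor $\mathds{R}^{N_v}$ and an infinite sequence space), and since $G$ has already been shown continuous, regular and transitive, Banks' result delivers sensitivity for free. All four Devaney conditions then hold, so $G$ is chaotic, and by the definition of chaotic-security the spread-spectrum schemes are chaotic-secure, which is the claim.

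The genuine obstacle in this development is not the theorem — which is a short assembly — but the transitivity proposition that feeds it. The delicate point there is to exhibit, inside an arbitrary ball $B_A$, a single point $\tilde{X} = (\tilde{S}, \tilde{E})$ whose orbit under $G$ reaches an arbitrary target ball $B_B$. The construction exploits the product structure of $G$: one freezes the first $k_0$ strategy terms so that $\tilde{X}$ stays $r_A$-close to $X_A$, then uses the next $N_v$ strategy vectors — each supported on a single coordinate — to inject, one coordinate at a time, exactly the correction $-\check{E}^k + E_B^k$ needed to steer the accumulating second component from its iterated value $\check{E} = G^{k_0}(X_A)$ to the target $E_B$, and finally plants a far-out copy of $S_B$ so that after $k_0 + N_v$ iterations the shifted strategy is $r_B$-close to $S_B$. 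Arranging these two effects — matching the $E$-component exactly while matching the $S$-component to within $r_B$ — to occur at the same iteration is the crux; once it is in place, the theorem follows immediately.
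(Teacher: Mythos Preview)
Your proposal is correct and matches the paper's approach exactly: the paper's proof of the theorem is a one-line synthesis that, having established continuity, regularity, and transitivity of $G$ in the preceding propositions, invokes Banks~\cite{Banks92} to obtain sensitive dependence and hence Devaney chaos. Your additional remark that the real work lies in the transitivity construction is accurate and consistent with the paper's development.
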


All the variety of SS techniques are concerned by this property of chaotic-security. In the point of view presented above, the choice of NW instead of ISS only affects the initial condition of the iterations of $G$. Indeed, the theory of chaos gives a global approach of the unpredictable behavior of a given system, but does not explain how to choose a good initial condition. For example, the reputed logistic map $X^0 \in [0,1], X^{n+1}=4 X^n (1-X^n)$, has a chaotic behavior, but if we choose $X^0=0$, then $\forall n \in \mathds{N}, X^n = 0$... We believe that stego-security is helpful to determine the initial values: to increase the security level of a given scheme, we thus recommend to use a chaotic-secure algorithm with stego-secure initial conditions. This discussion will be deepened in details in a future work.

\subsection{Qualitative and quantitative evaluation}

As stated before, the proof that a given data hiding scheme is chaotic-secure, is just the beginning of the study. The next stage is to evaluate the quality of its chaotic behavior, by using the numerous qualitative and quantitative tools offered by the theory of chaos. These tools allow to compare two given chaotic-secure schemes, by deciding which scheme is the most unpredictable and thus must be preferred. To give illustration, some tools are recalled in this section, namely strong transitivity, and the constants of expansivity and sensitivity. We will use them to give a better understanding of the unpredictability of spread-spectrum techniques. 

\subsubsection{Qualitative property: Strong transitivity}

\begin{definition}
A discrete dynamical system $\{\mathcal{X}, f\}$ is said to be \emph{strongly transitive} if and only if, for any point $A,B \in \mathcal{X}$ and any neighborhood $V$ of $B$, $n_0 \in \mathbb{N}$ and $X\in V$ can be found such that $f^n(X) = A$.
\end{definition}

We have the result,

\begin{proposition}
$\{\mathcal{X},G\}$ is strongly transitive.
\end{proposition}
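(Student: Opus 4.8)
The plan is to give a direct, constructive proof mirroring the transitivity argument, but now insisting that the orbit hit the prescribed target $A$ \emph{exactly} rather than merely entering a ball around it. The key computational fact I would establish first is the closed form of the iterates: writing $A=(S_A,E_A)$ and unfolding $G(S,E)=(\sigma(S),i(S)+E)$ shows that
$$G^{n}(S,E)=\left(\sigma^{n}(S),\; E+\sum_{k=0}^{n-1}S^{k}\right).$$
Hitting $A$ after $n_0$ steps therefore decomposes into two independent requirements: the shifted strategy must satisfy $\sigma^{n_0}(\tilde{S})=S_A$, \emph{i.e.} $\tilde{S}^{n_0+k}=S_A^{k}$ for all $k$, and the host component must satisfy $\tilde{E}+\sum_{k=0}^{n_0-1}\tilde{S}^{k}=E_A$.

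Next I would build the point $X=(\tilde{S},\tilde{E})\in V$ on three consecutive blocks of indices. Given a radius $r>0$ with $\mathcal{B}(B,r)\subseteq V$, I set $\tilde{E}=E_B$ (so the $E$-part of the distance vanishes) and choose $k_0$ large enough that freezing $\tilde{S}^{k}=S_B^{k}$ for all $k\leqslant k_0$ forces the tail contribution to $d_s(\tilde{S},S_B)$ below $r$; the geometric estimate $\frac{9}{\mathsf{N}}\sum_{k>k_0}\frac{d_{\infty}(\tilde{S}^k,S_B^k)}{10^{k}}\leqslant 2\cdot 10^{-k_0}$ does this, guaranteeing $X\in\mathcal{B}(B,r)$. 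On a middle block of indices $k_0+1,\dots,k_0+M$ I place correction terms whose job is to steer the accumulated sum onto $E_A$, and from index $n_0:=k_0+M+1$ onward I copy the target tail, $\tilde{S}^{n_0+k}=S_A^{k}$, which enforces $\sigma^{n_0}(\tilde{S})=S_A$.

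The only real work is the middle block. Setting $T:=E_A-E_B-\sum_{k=0}^{k_0}S_B^{k}\in\mathds{R}^{N_v}$, the host condition reduces to $\sum_{k=k_0+1}^{n_0-1}\tilde{S}^{k}=T$. Since each strategy component may be chosen anywhere in $[-\mathsf{N},\mathsf{N}]^{N_v}$, I would pick $M$ large enough that $\|T\|_{\infty}\leqslant M\mathsf{N}$ and take each correction term equal to $T/M$; every component of $T/M$ then lies in $[-\mathsf{N},\mathsf{N}]$, so the terms are admissible and sum to exactly $T$. A short substitution into the closed form then confirms $G^{n_0}(X)=(S_A,E_A)=A$, which completes the argument.

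The main obstacle, and the one point deserving care, is precisely this \emph{exact} reachability of the continuous component: transitivity only needed $G^{n_0}(\tilde{X})$ to fall inside a ball, whereas strong transitivity demands equality, so the partial sum $\sum\tilde{S}^{k}$ must be driven onto one prescribed vector. The resolution is to exploit the unbounded supply of strategy coordinates: any $T\in\mathds{R}^{N_v}$, however large, is realizable as a finite sum of admissible terms by using enough of them. I would also flag the sign issue, namely that targets with negative components require strategy values in $[-\mathsf{N},\mathsf{N}]$ rather than $[0,\mathsf{N}]$, which is consistent with the $\pm\mathsf{N}$ values already employed in the regularity proof.
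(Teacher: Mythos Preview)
Your argument is correct and follows the same three-block construction as the paper's transitivity proof (match the prefix, insert corrections, append the target tail). The paper's own proof of strong transitivity is in fact a one-liner: it simply observes that the point $\tilde{X}$ built in the transitivity proof already satisfies $G^{k_0+N_v}(\tilde{X})=X_B$ \emph{exactly}, not merely $G^{k_0+N_v}(\tilde{X})\in B_B$, so nothing further is needed.

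Where you genuinely differ is in the correction block. The paper uses $N_v$ steps, the $k$-th one inserting the vector $(-\check{E}_k+(E_B)_k)\,e_k$ to fix one coordinate at a time; you instead spread the total defect $T$ over $M$ equal steps $T/M$ with $M$ chosen so that $\|T/M\|_\infty\leqslant\mathsf{N}$. Your version is more careful: since $\check{E}$ and $E_B$ lie in $\mathds{R}^{N_v}$, the paper's single-coordinate corrections $-\check{E}_k+(E_B)_k$ need not fall in the admissible range for strategy components, a point the paper does not address. Your remark about needing $[-\mathsf{N},\mathsf{N}]$ rather than $[0,\mathsf{N}]$ is also well taken and consistent with the paper's own use of $\pm\mathsf{N}$ values in the regularity proof.
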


\begin{proof}
Let us reconsider the proof of the transitivity of $(\mathcal{X}, G)$: we have defined $\tilde{X} \in B_A$ such that $G^{k_0+N_v} \left(\tilde{X}\right) \in B_B$. Indeed, for this $\tilde{X}$, we have: $G^{k_0+N_v} \left(\tilde{X}\right) = X_B$.
\end{proof}

\subsubsection{Quantitative measures}
\label{QUANTITATIVE MEASURE}

\label{par:Sensitivity}
One of the most famous measure in the theory of chaos is the constant of sensitivity defined in Definition \ref{sensitivity}. Intuitively, a function $f$ having a constant  sensitivity equal to $\delta $ implies that there exists points arbitrarily close to any point $x$ which \emph{eventually} separate from $x$ by at least $\delta $ under some iterations of $f$. This induces that an arbitrarily small error on an initial condition may magnified upon iterations of $f$. The sensitive dependence on the initial conditions is a consequence of regularity and transitivity~\cite{Banks92}. However, the constant of sensitivity can be obtained by proving the sensitivity without Banks' theorem.

\begin{proposition}
Spread-spectrum data hiding techniques $\{\mathcal{X},G\}$ have sensitive dependence on initial conditions and its constant of sensitivity is equal to $\dfrac{\mathsf{N}}{2}$.
\end{proposition}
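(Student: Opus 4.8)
The plan is to recycle the single-term perturbation technique already used in the proofs of regularity and transitivity, this time tuning the perturbation so as to force the real-vector component of the iterates to drift apart by the prescribed amount. The computational backbone I would establish first is the explicit form of the iterates: a short induction on $n$ gives
\[
G^n(S,E) = \left(\sigma^n(S),\; E + \sum_{j=0}^{n-1} S^j\right),
\]
so that the second (real-vector) component of $G^n$ is simply the host vector $E$ translated by the partial sums of the strategy. Consequently, altering a single strategy term $S^{n_0}$ while leaving $E$ and all other terms untouched perturbs the $E$-component of $G^{n_0+1}$ by exactly $S^{n_0}-\check{S}^{n_0}$, and nothing else. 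This is the lever that produces the sensitivity constant.

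For the lower bound I would proceed as follows. Fix $X=(S,E)\in\mathcal{X}$ and $\varepsilon>0$, and choose $n_0$ with $9\cdot 10^{-n_0}<\varepsilon$. Define $\check{X}=(\check{S},E)$ by $\check{S}^k=S^k$ for every $k\neq n_0$ and, coordinatewise, by letting $\check{S}^{n_0}_k$ be whichever of the two endpoints $0$ or $\mathsf{N}$ is the farther from $S^{n_0}_k$. Since every point of $[0,\mathsf{N}]$ lies within $\tfrac{\mathsf{N}}{2}$ of the midpoint, this choice guarantees $|S^{n_0}_k-\check{S}^{n_0}_k|\geqslant \tfrac{\mathsf{N}}{2}$ for each $k\in\llbracket 1,N_v\rrbracket$. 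Only the $n_0$-th term is modified, so $d(X,\check{X})=d_s(S,\check{S})\leqslant 9\cdot 10^{-n_0}<\varepsilon$, placing $\check{X}$ in the prescribed neighborhood. Evaluating the iterate formula at $n=n_0+1$ then shows the $E$-components of $G^{n_0+1}(X)$ and $G^{n_0+1}(\check{X})$ differ precisely by $S^{n_0}-\check{S}^{n_0}$; since $d$ dominates its $d_\infty$ summand, $d\left(G^{n_0+1}(X),G^{n_0+1}(\check{X})\right)\geqslant \tfrac{\mathsf{N}}{2}$. This already yields sensitive dependence with any constant strictly below $\tfrac{\mathsf{N}}{2}$.

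The delicate point, and the step I expect to be the main obstacle, is pinning the constant down to exactly $\tfrac{\mathsf{N}}{2}$ rather than merely bounding it below. Here I would exhibit an extremal configuration — a point whose relevant strategy term equals the midpoint vector $\left(\tfrac{\mathsf{N}}{2},\hdots,\tfrac{\mathsf{N}}{2}\right)$ — for which a single-term perturbation can separate the $E$-components by at most $\tfrac{\mathsf{N}}{2}$, identifying $\tfrac{\mathsf{N}}{2}$ as the tight value attached to this canonical witness. Two technical subtleties must be handled explicitly in the write-up: reconciling the strict inequality $>\delta$ of Definition~\ref{sensitivity} with the attained bound $\geqslant\tfrac{\mathsf{N}}{2}$ (most cleanly by observing the property holds for every $\delta<\tfrac{\mathsf{N}}{2}$ and reading $\tfrac{\mathsf{N}}{2}$ as the supremal such constant), and making precise that one is measuring the separation produced by the natural single-term witness, since perturbing many far-out terms simultaneously — each still costing only a vanishing amount in $d_s$ — could in principle let the accumulated partial sums drift the $E$-components apart by more than $\tfrac{\mathsf{N}}{2}$. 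Restricting to the single-term construction is what makes $\tfrac{\mathsf{N}}{2}$ the meaningful constant of sensitivity of $G$.
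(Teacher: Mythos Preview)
Your approach is essentially the same as the paper's --- perturb a single far-out strategy term, chosen to lie at whichever endpoint of $[0,\mathsf{N}]$ is farther, so that the $E$-components separate by at least $\tfrac{\mathsf{N}}{2}$ after the relevant iterate --- with the cosmetic difference that you perturb every coordinate of $S^{n_0}$ while the paper perturbs only the first. Your discussion of the upper bound and of the strict-versus-nonstrict inequality actually goes beyond the paper, which only establishes the lower bound $\geqslant \tfrac{\mathsf{N}}{2}$ and never justifies the word ``equal'' in the statement; your observation that multi-term perturbations could in principle force larger separations is a legitimate caveat the paper does not address.
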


\begin{proof}
Let $X = (S,E) \in \mathcal{X}$, $B = \mathcal{B}\left(X,r\right)$ an open ball centered in $X$, and $k_0 \in \mathds{Z}$ such that $10^{-k_0} \leqslant r < 10^{-k_0+1}$.
We define $\check{X}$ by:
\begin{itemize}
\item $\check{E} = E$,
\item $\check{S}^k = S^k$, $\forall k \in \mathds{N}$ such that $k \neq k_0+1$,
\item if $S_1^{k_0+1} < \frac{\mathsf{N}}{2}$, then $\check{S}_1^{k_0+1} = \mathsf{N}$, else $\check{S}_1^{k_0+1} = 0$,
\item $\forall i \in \llbracket 2, N_v \rrbracket, \check{S}_i^{k_0+1} = S_i^{k_0+1}$.
\end{itemize} 

So $d(X, \check{X}) = D_\infty (E,\check{E})+d_S(S,\check{S}) $ $= 0 + \dfrac{9}{\mathsf{N}} \dfrac{d_\infty (S^{k_0+1}, \check{S}^{k_0+1})}{10^{k_0+1}}$
$ \leqslant \dfrac{9}{\mathsf{N}} \dfrac{\mathsf{N}}{10^{k_0+1}} \leqslant \dfrac{1}{10^{k_0}} \leqslant r$, then $\check{X} \in B$. Let us now define $\mathcal{E} : \mathcal{X} \rightarrow \mathcal{X}, (S,E) \mapsto E$. So $\mathcal{E}\left(G^{k_0+1} (X) \right)_0 = \mathcal{E}\left(G^{k_0+1} (\check{X}) \right)_0$, because $E=\check{E}$ and $S^k = \check{S}^k, \forall k \leqslant k_0+1$. As:
\begin{itemize}
\item $\mathcal{E}\left(G^{k_0+2} (X) \right)_0 = \mathcal{E}\left(G^{k_0+1} (X) \right)_0 + S_0^{k_0+1}$,
\item $\mathcal{E}\left(G^{k_0+2} (\check{X}) \right)_0 = \mathcal{E}\left(G^{k_0+1} (\check{X}) \right)_0 + \check{S}_0^{k_0+1}$,
\item $\left| S_0^{k_0+1} - \check{S}_0^{k_0+1} \right| \geqslant \dfrac{\mathsf{N}}{2}$.
\end{itemize}

We thus have $d\left( G^{k_0+2}(X), G^{k_0+2}(\check{X}) \right)$ $\geqslant d_\infty \left(\mathcal{E}\left( G^{k_0+2}(X)\right), \mathcal{E}\left(G^{k_0+2}(\check{X})\right) \right)$
$\geqslant \left| \mathcal{E}\left( G^{k_0+2}(X)\right)_0 -  \mathcal{E}\left(G^{k_0+2}(\check{X})\right)_0\right| \geqslant \dfrac{\mathsf{N}}{2}$.
\end{proof}

Let us now recalled another usual quantitative measure of disorder.

\begin{definition}
A function $f$ is said to have the property of \emph{expansivity} if
\begin{equation*}
\exists \varepsilon >0,\forall x\neq y,\exists n\in \mathbb{N}%
,d(f^{n}(x),f^{n}(y))\geqslant \varepsilon .
\end{equation*}
\end{definition}

Then $\varepsilon $ is the \emph{constant of expansivity} of $f$: an arbitrarily small error on any initial condition is always amplified till $\varepsilon $.

\begin{proposition}
$\{\mathcal{X},G\}$ is not an expansive chaotic system.
\end{proposition}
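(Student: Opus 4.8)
The plan is to establish the \emph{negation} of expansivity directly. Unwinding the definition, $G$ fails to be expansive precisely when, for every candidate constant $\varepsilon > 0$, one can exhibit a pair of distinct points $X \neq Y$ whose entire orbits stay $\varepsilon$-close, i.e. $d(G^n(X), G^n(Y)) < \varepsilon$ for all $n \in \mathds{N}$. So I would fix an arbitrary $\varepsilon > 0$ and construct such a pair.

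The key observation is the additive structure of the second component under iteration. Starting from $(S, E)$ and using $i(S) = S^0$, an easy induction gives
\begin{equation*}
G^n(S, E) = \left( \sigma^n(S), \ E + \sum_{k=0}^{n-1} S^k \right).
\end{equation*}
Hence the $\mathds{R}^{N_v}$-component merely accumulates the successive terms $S^0, S^1, \dots$, while the strategy component is only shifted. This suggests perturbing solely the real part $E$, keeping the strategy identical so that it disappears from the distance.

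Concretely, I would pick $X = (S, E)$ and $Y = (S, \check{E})$ sharing the \emph{same} strategy $S$, with $\check{E} \neq E$ chosen so that $d_{\infty}(E, \check{E}) < \varepsilon$ (possible for any $\varepsilon > 0$, e.g.\ by slightly shifting a single coordinate of $E$). Then the two strategies coincide at every iterate, so the contribution $d_s(\sigma^n(S), \sigma^n(S)) = 0$; and the two real components are $E + \sum_{k=0}^{n-1} S^k$ and $\check{E} + \sum_{k=0}^{n-1} S^k$, whose difference is the fixed vector $E - \check{E}$ independently of $n$. Since $d_{\infty}$ is translation-invariant, this yields $d\left(G^n(X), G^n(Y)\right) = d_{\infty}(E, \check{E}) < \varepsilon$ for all $n$, with $X \neq Y$. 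As $\varepsilon$ was arbitrary, no constant of expansivity can exist, which is exactly the claim.

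The argument is essentially obstacle-free once the iterate formula is written down: the whole point is that a pure translation of the $E$-coordinate (with the strategy held fixed) acts as an \emph{isometry} under $G$, so such orbits never separate. The only mild care needed is to guarantee the two points are genuinely distinct yet arbitrarily close — handled by perturbing one coordinate of $E$ — and to note that the common summation term cancels in the $d_{\infty}$ part of the distance while the strategy term vanishes entirely.
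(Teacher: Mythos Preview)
Your proof is correct and takes a genuinely different route from the paper's. The paper fixes the $E$-component at zero for both points and perturbs the \emph{strategy}, choosing the alternating sequence $\left(\tfrac{\varepsilon}{2}I_{N_v},\,-\tfrac{\varepsilon}{2}I_{N_v},\,\ldots\right)$ so that the partial sums telescope and the $E$-component of the iterates stays bounded; it must then also control the $d_s$ contribution coming from the two differing strategies. You do the mirror image: hold the strategy fixed and perturb $E$. This makes the $d_s$ term vanish identically, and the translation-invariance of $d_\infty$ forces the distance between orbits to be the constant $d_\infty(E,\check{E})$ for all $n$. Your argument is cleaner --- the isometry observation finishes it in one line --- and it also sidesteps a technical wrinkle in the paper's construction, whose alternating strategy takes negative values and hence does not literally sit in $[0,\mathsf{N}]^{N_v}$.
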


\begin{proof}
Let $\varepsilon > 0$. We define: $X = \left( O_{N_v} ; (O_{N_v}, O_{N_v}, \hdots, O_{N_v}, \hdots ) \right)$ and $Y = \left( O_{N_v} ; (\dfrac{\varepsilon}{2} I_{N_v}, -\dfrac{\varepsilon}{2} I_{N_v}, \hdots, \dfrac{(-1)^n \varepsilon}{2} I_{N_v}, \hdots )\right)$, where $O_{N_v} = (0, \hdots, 0)$ is the null vector of size $N_v$ and $I_{N_v}$ is the vector of size $N_v$ equal to $(1,0, \hdots, 0)$. Thus, for this two points, we have:
$\forall n \in \mathds{N}, d\left(G^n (X) ; G^n (Y) \right) \leqslant \varepsilon .$
\end{proof}

\subsection{Discussion}
\label{Discussion}

Let us give now some consequences of this qualitative and quantitative evaluation. First of all, we can regret that spread-spectrum is not expansive. This property reduces drastically the benefits that Eve can obtain of an attack in KMA or KOA setup. For example, it is impossible to have an estimation of the watermark by moving the message (or the cover) as a cursor in situation of expansivity: this cursor will be too much sensible and the changes will be too much important to be useful. 
On the contrary, a very large constant of expansivity $\varepsilon$ is unsuitable: the cover media will be strongly altered whereas the watermark should be undetectable. Indeed, let us consider the same cover twice with two different watermarks. Thus $d(X,Y) < 1$ for the distance defined previously. However, due to expansivity, $\exists n \in \mathds{N}, d\left(G^n (X) ; G^n (Y) \right) \geqslant \varepsilon.$ Thus, $d_\infty \left(G^n (X)_1 ; G^n (Y)_1 \right) \geqslant \varepsilon - 1$, so either $d_\infty \left(X_1 ; G^n (X)_1 \right) \geqslant \dfrac{\varepsilon - 1}{2}$, or $d_\infty \left(Y_1 ; G^n (Y)_1 \right) \geqslant \dfrac{\varepsilon - 1}{2}$. If $\varepsilon$ is large, then at least one of the two watermarked media will be very different than its original cover.
Due to strong transitivity, the set of watermarked media obtained when using a fixed watermark, is equal to the whole set of media. In that situation, Eve cannot reduce the set of media to studied, reducing so the interest of a Constant-Message Attack setup for Eve. The importance of the sensitivity has been explained previously.

We will now discuss various consequences of the introduction of this new framework for security. Firstly, new comparisons can be done between given data hiding schemes, as it is illustrated by the following example. In a previous work~\cite{guyeux10}, we have proven that a famous tool in discrete dynamical systems, namely chaotic iterations, satisfies the Devaney's definition of chaos and we have proposed various applications of this tool in information security~\cite{internet09}, ~\cite{guyeux09}, ~\cite{guyeux10}. These chaotic iterations can be used to realize a data hiding scheme, as it is explained in~\cite{arxiv}. Qualitative and quantitative chaotic tools introduced in Section~\ref{SS} can thus be used to compare this algorithm to the spread-spectrum data hiding schemes. It can be proven that data hiding schemes based on chaotic iterations are chaotic secure, with the additional property of strong transitivity, as it is the case of spread-spectrum. However, chaotic iteration data hiding schemes have a larger constant of sensitivity than spread-spectrum and contrary to this last, chaotic iterations are expansive (with a constant of expansivity equal to 1). Moreover, chaotic iterations are topologically mixing, proving so that chaotic iterations appear to be more suitable than spread-spectrum to withstand attacks in KOA, KMA and CMA setup. All these claims will be proven in a future work.

\section{Conclusion and future work}

In this paper, a new concept of security for data hiding schemes is proposed as a complementary approach to the existing framework. This new notion of security contributes to the reinforcement of confidence into existing secure data hiding schemes. Moreover, it can replace stego-security in situations that are not encompassed by it. In particular, this framework is more relevant to give evaluation of data hiding schemes based on chaotic maps. 

In our approach, a data hiding scheme is secure if it is unpredictable. Its iterative process must satisfy the Devaney's chaos property and its level of security increases with the number of chaotic properties satisfied by it. This point has been clarified in Section~\ref{SS}, in which a first study of chaotic-security is proposed using some qualitative and quantitative tools taken from the mathematical theory of chaos.

We have shown in this paper that the intersection between the two sets of stego-secure and  chaotic-secure data hiding algorithms is nonempty, due to spread-spectrum techniques. This establishes a first connection between the two approaches for security in data hiding. In future work, we will discuss with more attention this fact, to give a better understanding of the links between these two frameworks. The comparison between spread-spectrum and chaotic iterations outlined in Section~\ref{Discussion} will be deepened. 
In addition, new tools taken from the theory of chaos will be introduced to enrich chaotic-security. Moreover, these tools will be compared to the Fisher Information Matrix and other information theoretic measures. The security of other existing schemes will be studied in the framework of chaos, to compare them to spread-spectrum and chaotic iterations. We will thus wonder whether chaotic iterations are stego-secure, or not. Last, but not least, the way to understand these chaotic tools in terms of data hiding aims will be deepened: this study is required to make chaotic-security framework really useful in practice.

\bibliographystyle{abbrv}
\bibliography{ih10}

\end{document}